\begin{document}
\newtheorem{remark}{\bf~~Remark}
\newtheorem{proposition}{\bf~~Proposition}
\newtheorem{theorem}{\bf~~Theorem}
\newtheorem{definition}{\bf~~Definition}
\title{\LARGE Reconfigurable Intelligent Surface~(RIS) Assisted Wireless Coverage Extension: RIS Orientation and Location Optimization}

\author{
\IEEEauthorblockN{
	{Shuhao Zeng}, \IEEEmembership{Student Member, IEEE},
	{Hongliang Zhang}, \IEEEmembership{Member, IEEE},
	{Boya Di}, \IEEEmembership{Member, IEEE},\\
	{Zhu Han}, \IEEEmembership{Fellow, IEEE},
	{and Lingyang Song}, \IEEEmembership{Fellow, IEEE}}
	\vspace{-0.5cm}
	\thanks{Manuscript received June 17, 2020; revised August 19, 2020; accepted September 14, 2020. This work was supported in part by the National Natural Science Foundation of China under Grants 61625101, 61829101, and 61941101, and in part by NSF EARS-1839818, CNS-1717454, CNS-1731424, and CNS-1702850. The associate editor coordinating the review of this paper and approving it for publication was G. Alexandropoulos. \emph{(Corresponding author: Lingyang Song.)}}
	\thanks{S. Zeng and L. Song are with Department of Electronics, Peking University, Beijing, China (email: \{shuhao.zeng, lingyang.song\}@pku.edu.cn).}
	\thanks{H. Zhang is with Department of Electronics, Peking University, Beijing, China, and also with Department of Electrical Engineering, Princeton University, USA (email: hongliang.zhang92@gmail.com).}
	\thanks{B. Di is with Department of Electronics, Peking University, Beijing, China, and also with Department of Computing, Imperial College London, UK (email: diboya92@gmail.com).}
	\thanks{Z. Han is with Electrical and Computer Engineering Department, University of Houston, Houston, TX, USA, and also with the Department of Computer Science and Engineering, Kyung Hee University, Seoul, South Korea (email: zhan2@uh.edu).}
}


\maketitle

\begin{abstract}
Recently, reconfigurable intelligent surfaces~(RIS) have attracted a lot of attention due to their capability of extending cell coverage by reflecting signals toward the receiver. In this letter, we analyze the coverage of a downlink RIS-assisted network with one base station~(BS) and one user equipment~(UE). Since the RIS orientation and the horizontal distance between the RIS and the BS have a significant influence on the cell coverage, we formulate an RIS placement optimization problem to maximize the cell coverage by optimizing the RIS orientation and horizontal distance. To solve the formulated problem, a coverage maximization algorithm~(CMA) is proposed, where a closed-form optimal RIS orientation is obtained. Numerical results verify our analysis. 
\end{abstract}
\vspace{-0.1cm}
\begin{IEEEkeywords}
Reconfigurable intelligent surface assisted communication, coverage analysis, RIS placement optimization
\end{IEEEkeywords}
\section{Introduction}
\vspace{-0.1cm}
Reconfigurable intelligent surfaces~(RIS) are an emerging wireless communication technology and is capable of shaping the propagation environment into a desired form~\cite{MHLKZG,C_2018}. The RIS is inlaid with a large number of reflecting elements whose electromagnetic~(EM) response can be tuned dynamically with onboard positive-intrinsic-negative~(PIN) diodes~\cite{M_2019}. By controlling the EM responses of the elements, the received signal strength can be optimized~\cite{CSG_2020}.

In the literature, RIS-assisted wireless networks have been studied to increase coverage and improve link quality. Existing works on RIS-assisted networks can be roughly divided into point-to-point~\cite{LYMM_2020,XDR_2019} and multi-user networks~\cite{BHLYZH_2020,CAGMC_2019}. In~\cite{LYMM_2020}, the authors considered an RIS-assisted point-to-point network without direct link, where the network coverage, the signal-to-noise-ratio~(SNR) gain, and the delay outage rate of this network were analyzed. The authors in~\cite{XDR_2019} maximized the sum rate of a point-to-point RIS-assisted multi-input single-output system by jointly optimizing the beamformer at the transmitter and continuous phase shifts of the RIS. In~\cite{BHLYZH_2020}, a multi-user network assisted by an RIS was investigated, where the digital beamformer and the RIS configuration are jointly optimized to maximize the sum rate. In~\cite{CAGMC_2019}, the authors investigated a multi-user RIS-assisted network, where the transmit power allocation and the phase shifts of the RIS are optimized to improve energy efficiency.

However, existing works only utilized the RIS for coverage extension given the RIS location, but how to deploy the RIS to further maximize the cell coverage has not been studied yet. In this letter, we consider an RIS-assisted downlink cellular network with one base station~(BS) and one user equipment~(UE). The cell coverage of this network is analyzed first. Then, since the RIS orientation and the horizontal distance between the RIS and the BS will significantly influence the cell coverage, we maximize the cell coverage by optimizing the RIS orientation and horizontal distance. To solve the RIS placement optimization problem, we propose a coverage maximization algorithm~(CMA), where a closed-form optimal RIS orientation is derived first, and the horizontal distance is then optimized using the interior point method. 

The rest of this letter is organized as follows. In Section~\ref{sec_sys_model}, the system model for the RIS-assisted network is introduced. The cell coverage is analyzed in Section~\ref{sec_coverage_ana}. In Section~\ref{sec_placement}, we first formulate a coverage maximization problem, and then propose the CMA to solve the problem. In Section~\ref{sec_simulation}, simulation results validate our analysis. Finally, we conclude our work in Section~\ref{sec_conclusion}.
\section{System Model}%
\vspace{-0.1cm}
\label{sec_sys_model}
In this section, we first introduce the RIS-assisted network model. The channel model is then constructed. 
\vspace{-0.2cm}
\subsection{Scenario Description}
As shown in Fig.~\ref{system_model}, we consider a narrow-band downlink network with one UE and one BS. Due to the dynamic wireless environment involving unexpected fading, the link between the UE and the BS can be unstable or even fall into a complete outage~\cite{HBLZ_2020}. To solve this problem, we introduce an RIS to assist the communication. To describe the topology of the system, a Cartesian coordinate is adopted, where the $xoy$ plane coincides with the RIS surface, and the $z$-axis is vertical to the RIS. Based on the $xoy$ plane, the space can be divided into two sides. To ensure that the RIS can reflect the signal from the BS towards the UE, we assume that the UE and the BS are in the same side of the RIS, i.e., $z>0$ in the coordinate system. 
{\setlength{\abovecaptionskip}{0.1cm}
	\setlength{\belowcaptionskip}{0cm} 
\begin{figure}[!tpb]
	\centering
	\includegraphics[width=2.5in]{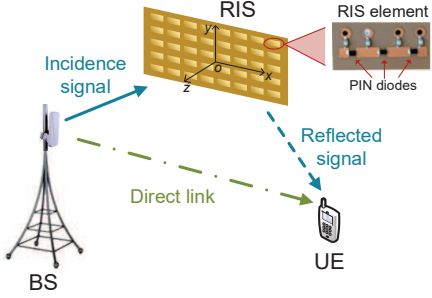}
	\vspace{-0.4cm}
	\caption{System model of the RIS-assisted cellular network.}
	\vspace{-0.6cm}
	\label{system_model}
\end{figure}
}

The RIS is composed of $M\times N$ sub-wavelength elements, each with the size of $s_M\times s_N$. As shown in Fig.~\ref{system_model}, the RIS element contains several PIN diodes~\cite{LZRJXXQT_2020}. When the biased voltages applied to the PIN diodes change, the RIS element will change the reflection phase shift accordingly. 
Define $\Gamma$ and $\varphi^{m,n}$ as the reflection amplitude change and phase shift of the $(m,n)$-th element, respectively\footnote{Without loss of generality, we assume that the states of the PIN diodes in an element do not influence reflection amplitude change $\Gamma$ of the element.}. The reflection coefficient of the $(m,n)$-th element can thus be written as $\Gamma_{m,n}=\Gamma e^{-j\varphi_{m,n}}$. Assume that the BS is in the far field of the RIS.  Amplitude change $\Gamma$ can be modeled by $\cos\theta_i$~\cite{OEE_2020}, where $\theta_i$ is the incidence angle from the BS to the RIS. Besides, we assume that phase shift $\varphi_{m,n}$ is not influenced by the incidence and reflection angles.
\subsection{Channel Model}
The channel from the BS to the UE is composed of $MN$ RIS-based channels and a direct link, where the $(m,n)$-th RIS-based channel represents the channel from the BS to the UE via the $(m,n)$-th RIS element\footnote{Since we consider average performance here, the small scale fading of the RIS-based channel is averaged. Besides, it is assumed that the small-scale fading corresponding to different RIS elements are independently distributed.}.


The channel gain for the $(m,n)$-th RIS-based channel can be expressed as~\cite{HBLZ_2020},
\begin{align}
\label{RIS_based}
h_{m,n}=\frac{\lambda\sqrt{Gs_Ms_N}}{(4\pi)^{\frac{3}{2}}\sqrt{D_{m,n}^{\alpha}d_{m,n}^{\alpha}}}e^{-j\frac{2\pi}{\lambda}(D_{m,n}+d_{m,n})},
\end{align}
where $\lambda$ is the wavelength corresponding to the carrier frequency, $G$ is antenna gain, $s_M$ and $s_N$ are the size of one RIS element, $\alpha$ represents the pathloss exponent, $D_{m,n}$ and $d_{m,n}$ are the distance between the BS and the $(m,n)$-th RIS element, and the distance between the $(m,n)$-th RIS element and the UE, respectively. Since the BS is far away from the RIS, its distances from different RIS elements are approximately the same, i.e., $D_{m,n}=D$, where $D$ is the distance between the BS and the center of the RIS. Besides, in order to derive the area of the cell coverage, we focus on finding the cell edge, and thus, it can be assumed that the UE is in the far field of the RIS. Therefore, we have $d_{m,n}=d$, where $d$ is the distance between the UE and the center of the RIS. As a result, the pathloss is common to all RIS elements, i.e., $GD_{m,n}^{-\alpha}d_{m,n}^{-\alpha}=GD^{-\alpha}d^{-\alpha}\triangleq PL_R$

For the direct link, the channel model can be written as
\begin{align}
\label{direct_link}
h_D=\frac{\lambda\sqrt{G}}{4\pi d_{BU}}e^{-j\frac{2\pi}{\lambda}d_{BU}},
\end{align}
where $d_{BU}$ represents the distance between the BS and the UE. Based on (\ref{RIS_based}) and (\ref{direct_link}), the channel from the BS to the UE is given by
\begin{align}
\label{channel_A}
h=\sum_{m,n}\Gamma_{m,n}h_{m,n}+h_D,
\end{align}
Therefore, the received signal-to-noise ratio~(SNR) can be expressed as $\gamma=\frac{P|h|^2}{\sigma^2}$,
where $P$ denotes the transmit power of the BS, and $\sigma^2$ is the variance of the additive white Gaussian noise~(AWGN) received at the UE.


 
\section{Cell Coverage Analysis}
\label{sec_coverage_ana}
Before analyzing the cell coverage, we first introduce the definition.
\begin{definition}
	The cell coverage is defined as an area where the received SNR at the UE is larger than a certain threshold $\gamma_{th}$~\cite{JDMT_2003,HPJX_2019}, i.e.,
	{\setlength\abovedisplayskip{-0.2cm}
		\begin{align}
		\label{def_coverage}
		\gamma \ge \gamma_{th}=\gamma_{s}L_{mar},
		\end{align}
		where $\gamma_s$ represents UE sensitivity, and $L_{mar}$ is margin for penetration loss.
	}
\end{definition}

In the following, we first find the optimal phase shifts of the RIS that maximizes the SNR. Then, the cell coverage is analyzed.
\subsection{Optimal Phase Shifts of RIS}
To improve the system performance, the phase shifts are optimized to maximize the SNR. Define $\eta_R=\frac{P}{\sigma^2}\frac{\lambda^2}{(4\pi)^3}\big(\cos^2(\theta_i) s_Ms_N\big)$, $\eta_D=\frac{P}{\sigma^2}\frac{\lambda^2 G}{(4\pi)^2}$, and  $\eta_X=2\frac{P}{\sigma^2}\frac{\lambda^2\sqrt{G}}{(4\pi)^{\frac{5}{2}}}\big(\cos(\theta_i)\sqrt{s_Ms_N}\big)$. The maximum SNR can be derived, given in the following theorem.
\begin{theorem}
	\label{E_A}
	When the phase shifts are set as
	\begin{align}
	\label{opt_phase}
	\varphi_{m,n}=\big(\frac{2\pi}{\lambda}d_{BU}-\frac{2\pi}{\lambda}(D_{m,n}+d_{m,n})\big) \mod 2\pi,
	\end{align}
	the SNR is maximized, and its value can be given by
	\begin{align}
	\label{max_SNR_cont}
	\gamma=\eta_R M^2N^2PL_R+\eta_Dd_{BU}^{-2}+\eta_XMN\frac{\sqrt{PL_R}}{d_{BU}}.
	\end{align}
\end{theorem}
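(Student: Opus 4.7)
The plan is a phase-alignment (triangle-inequality) argument followed by routine algebra. First I would apply the far-field approximations already invoked before (\ref{RIS_based}), namely $D_{m,n}=D$ and $d_{m,n}=d$, so that every RIS-based channel in (\ref{RIS_based}) has a common amplitude $A_R \triangleq \tfrac{\lambda\sqrt{s_Ms_N}}{(4\pi)^{3/2}}\sqrt{PL_R}$, while (\ref{direct_link}) gives a direct-link amplitude $A_D \triangleq \tfrac{\lambda\sqrt{G}}{4\pi d_{BU}}$. Using $\Gamma_{m,n}=\cos(\theta_i)\,e^{-j\varphi_{m,n}}$, equation (\ref{channel_A}) becomes
\begin{align*}
h \;=\; \cos(\theta_i)\,A_R\sum_{m,n}e^{-j\varphi_{m,n}-j\frac{2\pi}{\lambda}(D_{m,n}+d_{m,n})} \;+\; A_D\,e^{-j\frac{2\pi}{\lambda}d_{BU}}.
\end{align*}

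Next I would invoke the triangle inequality. The $MN+1$ summands have magnitudes $\cos(\theta_i)A_R$ (for each of the $MN$ RIS terms) and $A_D$, and hence $|h|\le MN\cos(\theta_i)A_R+A_D$, with equality if and only if all summands share a common phase. Since the direct-link phase $-\tfrac{2\pi}{\lambda}d_{BU}$ cannot be adjusted, the natural choice is to align every RIS contribution with it, which forces
\begin{align*}
-\varphi_{m,n}-\tfrac{2\pi}{\lambda}(D_{m,n}+d_{m,n}) \;\equiv\; -\tfrac{2\pi}{\lambda}d_{BU} \pmod{2\pi}.
\end{align*}
Solving for $\varphi_{m,n}$ reproduces (\ref{opt_phase}) and therefore proves optimality of the stated phase shifts.

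Finally, with the optimal phases $|h|=MN\cos(\theta_i)A_R+A_D$, so
\begin{align*}
\gamma \;=\; \tfrac{P}{\sigma^2}|h|^2 \;=\; \tfrac{P}{\sigma^2}\bigl(M^2N^2\cos^2(\theta_i)A_R^2 \;+\; 2MN\cos(\theta_i)A_R A_D \;+\; A_D^2\bigr).
\end{align*}
It is then a direct bookkeeping exercise to verify that substituting the definitions of $A_R$ and $A_D$ turns the three terms into $\eta_R M^2N^2 PL_R$, $\eta_X MN\sqrt{PL_R}/d_{BU}$, and $\eta_D d_{BU}^{-2}$, respectively, yielding (\ref{max_SNR_cont}). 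I do not anticipate any real obstacle: the one subtle point worth flagging is that the far-field assumption is precisely what makes both $|h_{m,n}|$ and $|\Gamma_{m,n}|=\cos(\theta_i)$ independent of $(m,n)$, so that the single per-element degree of freedom $\varphi_{m,n}$ is already sufficient to achieve the triangle-inequality bound.
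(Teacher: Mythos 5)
Your proof is correct and follows essentially the same route as the paper: both arguments reduce to aligning the phase of every RIS-based term with the fixed phase of the direct link, which yields exactly (\ref{opt_phase}), and the amplitude bookkeeping with $A_R$ and $A_D$ reproduces (\ref{max_SNR_cont}) term by term. The only cosmetic difference is that you apply the triangle inequality to $|h|$ directly, whereas the paper expands $|h|^2$ and argues that the double sum and the cross-term sum are simultaneously maximized; your framing makes the optimality claim slightly more self-contained, but the underlying idea is identical.
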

\begin{proof}
	Based on (\ref{channel_A}), the SNR can be rewritten as
			\begin{align}
			\gamma=&\eta_{R}PL_{R}\sum_{m,m',n,n'}e^{j(\varphi_{m,n}+\theta_{m,n}-\varphi_{m',n'}-\theta_{m',n'})}+\eta_Dd_{BU}^{-2}\notag\\
			&+\eta_X\frac{\sqrt{PL_{R}}}{d_{BU}}\sum_{m,n}\cos(\varphi_{m,n}+\theta_{m,n}-\frac{2\pi}{\lambda}d_{BU}),
			\end{align}
			where $\theta_{m,n}=\frac{2\pi}{\lambda}(D_{m,n}+d_{m,n})$.
		We can find that when the phase shifts are set as shown in (\ref{opt_phase}), $\sum_{m,m',n,n'}e^{j(\phi_{m,n}+\theta_{m,n}-\phi_{m',n'}-\theta_{m',n'})}$ and $\sum_{m,n}\cos(\varphi_{m,n}+\theta_{m,n}-\frac{2\pi}{\lambda}d_{BU})$ are simultaneously maximized, with the values being $M^2N^2$ and $MN$, respectively~\cite{HBLZ_2020}. Therefore, Theorem~\ref{E_A} holds.
\end{proof}
\subsection{Cell Coverage Analysis}
\begin{figure}[!tpb]
	\centering
	\includegraphics[width=2in]{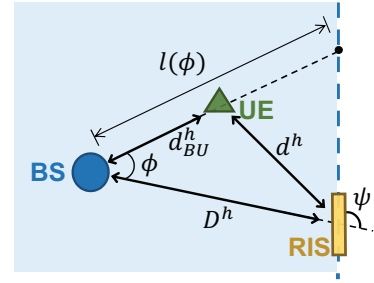}
	\caption{Top view of the RIS-assisted network.}
	\label{system_profile}
\end{figure}
In this section, the cell coverage in a given direction is investigated first. The area of the cell coverage is then derived. 
\subsubsection{Cell Coverage in a given Direction}
As shown in Fig.~\ref{system_profile}, we use $\phi$ to represent the angle between the given direction and the direction from the BS to the RIS, where $\phi \in [0,2\pi)$. To derive the cell coverage in direction $\phi$, we first analyze how SNR $\gamma$ changes with the horizontal distance between the BS and the UE, denoted by $d_{BU}^h$. 

When $d_{BU}^h$ is shorter than the horizontal distance between the BS and the RIS, denoted by $D^h$, it is hard to find out the influence of $d_{BU}^h$ on SNR $\gamma$. This is because when $d_{BU}^h$ becomes larger, the horizontal distance between the UE and the RIS, denoted by $d^h$, first decreases and then increases. To avoid discussing the trend of $\gamma$ with respect to $d_{BU}^h$, we require that SNR $\gamma$ is always above threshold $\gamma_{th}$ when $d_{BU}^h<D^h$. Note that $\gamma$ is lower bounded by $\eta_Dd_{BU}^{-2}$, which decreases as the UE moves away from the BS. Therefore, by assuming that when $d_{BU}^h=D^h$, $\eta_Dd_{BU}^{-2}$ does not fall below $\gamma_{th}$, i.e., 
\begin{align}
\label{above_thr}
\eta_D \Big(\sqrt{(D^h)^2+(H_B-H_U)^2}\Big)^{-2} \ge \gamma_{th},
\end{align}
where $H_B$ and $H_U$ are the height of the BS and the UE, respectively, the SNR is guaranteed to exceed $\gamma_{th}$.

When $d_{BU}^h$ exceeds $D^h$, it is obvious that $\gamma$ strictly decreases as the UE moves away from the BS. Besides, $\gamma$ approaches to $0$ when the UE is sufficiently far away from the BS. 

Based on above discussions, it can be found that the horizontal distance $d_{BU}^h$ with $\gamma=\gamma_{th}$ is unique, which is denoted by $d_{th}(\phi)$. Define $g$ as the function of $\gamma$ with respect to the angle $\phi$ and the horizontal distance $d_{BU}^h$, i.e., $\gamma=g(\phi,d_{BU}^h)$. We then have
\begin{align}
\label{d_th}
g\big(\phi,d_{th}(\phi)\big)=\gamma_{th}, \forall \phi.
\end{align}
The expression of $g$ is given by
\begin{align}
\label{expression_g}
g(\phi,d_{BU}^h)&=\eta_R M^2N^2PL_R(\phi,d_{BU}^h)+\eta_D\Big(d_{BU}(\phi,d_{BU}^h)\Big)^{-2}\notag\\
&+\eta_XMN\sqrt{PL_R(\phi,d_{BU}^h)}/d_{BU}(\phi,d_{BU}^h),
\end{align}
where $PL_R(\phi,d_{BU}^h)=GD^{-\alpha}\Big((D^h-d_{BU}^h\cos\phi)^2+(d_{BU}^h\sin\phi)^2+(H_R-H_U)^2\Big)^{-\frac{\alpha}{2}}$, and $d_{BU}(\phi,d_{BU}^h)=\sqrt{(d_{BU}^h)^2+(H_B-H_U)^2}$. We can also obtain that SNR $\gamma$ exceeds threshold $\gamma_{th}$ only if $d_{BU}^h \le d_{th}(\phi)$. 

Recall that the UE and the BS are in the same side of the RIS. Therefore, the cell coverage in some directions cannot achieve $d_{th}(\phi)$. Specifically, define $\psi$ as the angle between the RIS orientation and direction from the BS to the RIS, where $\psi \in (0,\pi)$, and we have the following two cases,
\begin{enumerate}[label=$\bullet$]
	\item $\phi \in [0,\psi)\cup(\psi+\pi,2\pi)$: To guarantee that the UE and the BS are in the same side of the RIS, the horizontal distance between the BS and the UE has to satisfy
	{\setlength\abovedisplayskip{0cm}
		\setlength\belowdisplayskip{0cm}
	\begin{align}
	d_{BU}^h \le l(\phi),
	\end{align}
	where $l(\phi)=D^h/(\cos(\phi)-\sin(\phi)\cot(\psi))$ denotes the distance between the BS and the RIS in direction $\phi$, as shown in Fig.~\ref{system_profile}. Therefore, the cell coverage is given by
	\begin{align}
	\label{coverage_phi_min}
	c(\phi)=\min\big(d_{th}(\phi),l(\phi)\big).
	\end{align}
	\item $\phi \in [\psi,\psi+\pi]$: The UE and the BS are always in the same side of the RIS. Therefore, the cell coverage is
	\begin{align}
	\label{coverage_phi}
	c(\phi)=d_{th}(\phi).
	\end{align}
}
\end{enumerate} 
To have a better understanding of the cell coverage in directions $\phi \in [0,\psi)\cup(\psi+\pi,2\pi)$, we compare $d_{th}(\phi)$ with $l(\phi)$ in the following remark.
\begin{remark}
	\label{profile}
	When $\phi \in [0,\psi)\cup(\psi+\pi,2\pi)$, there are two angles satisfying $d_{th}(\phi)=l(\phi)$, denoted by $\phi_{u}$ and $\phi_{l}$, i.e.,
	{\setlength\abovedisplayskip{0cm}
		\setlength\belowdisplayskip{0cm}
		\begin{align}
		\label{cons_phi_l}
		d_{th}(\phi_l)=l(\phi_l), \phi_l <\pi,\\
		\label{cons_phi_u}
		d_{th}(\phi_u)=l(\phi_u), \phi_u >\pi.
		\end{align}
	Besides, $d_{th}(\phi)>l(\phi)$ only when $\phi \in [0,\phi_l)\cup(\phi_u,2\pi)$.}
\end{remark}
\begin{proof}
	See Appendix \ref{app_profile}.
\end{proof}
\subsubsection{Area of cell coverage}
Based on (\ref{coverage_phi_min}),  (\ref{coverage_phi}), and Remark~\ref{profile}, we can derive the area of the cell coverage, given in the following theorem.
\begin{theorem}
	The area of the cell coverage is given by
	\begin{align}
	\label{area}
	S=\int_{\phi_l}^{\phi_u} \frac{1}{2}d_{th}^2(\phi)d\phi+\frac{1}{2}\sin(\phi_l-\phi_u)l(\phi_l)l(\phi_u).
	\end{align}
\end{theorem}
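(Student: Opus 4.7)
The plan is to exploit the polar representation of the coverage region with pole at the BS. Since $c(\phi)$ is the cell-edge distance in direction $\phi$, the coverage region is star-shaped about the BS and its area is simply
\begin{align*}
S = \int_0^{2\pi} \frac{1}{2} c^2(\phi)\, d\phi.
\end{align*}
The theorem will then follow by splitting this integral according to the piecewise description of $c(\phi)$ inherited from (\ref{coverage_phi_min}), (\ref{coverage_phi}) and Remark~\ref{profile}.

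First, I would combine those three ingredients to establish that $c(\phi) = d_{th}(\phi)$ on $[\phi_l, \phi_u]$ and $c(\phi) = l(\phi)$ on $[0,\phi_l) \cup (\phi_u, 2\pi)$. On the arc $[\psi, \psi+\pi]$, equation (\ref{coverage_phi}) gives $c = d_{th}$ directly. On the complementary arc, equation (\ref{coverage_phi_min}) gives $c = \min(d_{th}, l)$, and Remark~\ref{profile} identifies $d_{th} > l$ precisely on $[0,\phi_l) \cup (\phi_u, 2\pi)$ while $d_{th} \le l$ elsewhere on that arc. Stitching the two cases together yields the claimed piecewise description.

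Next I would substitute this description into the polar-area integral and split the domain to get
\begin{align*}
S = \int_{\phi_l}^{\phi_u} \frac{1}{2} d_{th}^2(\phi)\, d\phi + \left( \int_0^{\phi_l} + \int_{\phi_u}^{2\pi} \right) \frac{1}{2} l^2(\phi)\, d\phi.
\end{align*}
The first term already matches the statement. For the second, rather than integrate the explicit rational expression $l(\phi) = D^h/(\cos\phi - \sin\phi\cot\psi)$, I would interpret it geometrically: each infinitesimal sector $\frac{1}{2} l^2(\phi)\, d\phi$ is a polar slice cut off by the straight line carrying the RIS, so the full second integral equals the area of the triangle whose apex is at the BS and whose other two vertices are the points where the rays $\phi = \phi_l$ and $\phi = \phi_u$ meet the RIS. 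Its two BS-adjacent sides have lengths $l(\phi_l)$ and $l(\phi_u)$, and since the integration arc wraps through $\phi = 0$, the interior angle at the apex is $2\pi - (\phi_u - \phi_l)$. The elementary $\frac{1}{2} ab \sin C$ formula together with the identity $\sin(2\pi - (\phi_u - \phi_l)) = \sin(\phi_l - \phi_u)$ then reproduces the second term exactly.

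The main obstacle, modest as it is, is the bookkeeping that glues the two disjoint sub-arcs $[0,\phi_l)$ and $(\phi_u,2\pi)$ into a single genuine triangle and checks that $\sin(\phi_l - \phi_u)$ yields a positive quantity, as any area must. Since Remark~\ref{profile} forces $\phi_l < \pi < \phi_u$, we have $\phi_u - \phi_l \in (\pi, 2\pi)$, and hence $\sin(\phi_l - \phi_u) = -\sin(\phi_u - \phi_l) > 0$, which closes the argument.
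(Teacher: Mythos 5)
Your proposal is correct and follows the paper's proof almost exactly: the same polar-area formula $S=\int_0^{2\pi}\tfrac{1}{2}c^2(\phi)\,d\phi$, the same use of (\ref{coverage_phi_min}), (\ref{coverage_phi}) and Remark~\ref{profile} to identify $c=d_{th}$ on $[\phi_l,\phi_u]$ and $c=l$ on the complementary arc, and the same three-way splitting of the integral. The one place you diverge is the evaluation of the two $\tfrac{1}{2}l^2(\phi)$ integrals: the paper computes them analytically by exhibiting the antiderivative $F(\phi)=\tfrac{(D^h)^2}{2}\tfrac{\sin\phi}{\cos\phi-\sin\phi\cot\psi}$ with $F'(\phi)=\tfrac{l^2(\phi)}{2}$, whereas you recognize the swept region as the triangle with apex at the BS and apply $\tfrac{1}{2}ab\sin C$. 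Your geometric route is arguably cleaner and makes the sign of $\sin(\phi_l-\phi_u)$ transparent, while the paper's antiderivative is more mechanical but verifiable by direct differentiation. One small point: your claim that $\phi_u-\phi_l\in(\pi,2\pi)$ does not follow from $\phi_l<\pi<\phi_u$ alone; it follows from the sharper localization in Remark~\ref{profile} that $\phi_l\in[0,\psi)$ and $\phi_u\in(\psi+\pi,2\pi)$, so the gap exceeds $\pi$. With that citation corrected, the argument is complete.
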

\begin{proof}
	The area of the cell coverage can be expressed as $S=\int_{0}^{2\pi} \frac{c^2(\phi)}{2}d\phi$.
	According to Remark~\ref{profile}, when $\phi \in [0,\phi_l)\cup(\phi_u,2\pi)$, $c(\phi)=l(\phi)$. Otherwise, $c(\phi)=d_{th}(\phi)$. Therefore, we have
	{\setlength\abovedisplayskip{0.05cm}
		\setlength\belowdisplayskip{0.05cm}
	\begin{small}
	\begin{align}
	\label{area_proof_1}
	S=\int_{\phi_l}^{\phi_u} \frac{d_{th}^2(\phi)}{2}d\phi+\int_{0}^{\phi_l} \frac{l^2(\phi)}{2}d\phi+\int_{\phi_u}^{2\pi} \frac{l^2(\phi)}{2}d\phi.
	\end{align}
	\end{small}
	Define $F(\phi)=\frac{(D^h)^2}{2}\frac{\sin\phi}{\cos\phi-\sin\phi\cot\psi}$. Since $F'(\phi)=\frac{l^2(\phi)}{2}$, we have}
	{	\setlength\abovedisplayskip{0.2cm}
		\setlength\belowdisplayskip{0.1cm}
	\begin{small}
	\begin{align}
	\int_{0}^{\phi_l} \frac{l^2(\phi)}{2}d\phi+\int_{\phi_u}^{2\pi} \frac{l^2(\phi)}{2}d\phi=\frac{\sin(\phi_l-\phi_u)l(\phi_l)l(\phi_u)}{2},
	\end{align}
	\end{small}
	which ends the proof.}
\end{proof}
\section{RIS Placement Optimization}
\label{sec_placement}
In this section, we first formulate a coverage maximization problem. A coverage maximization algorithm~(CMA) is then proposed to solve the formulated problem.
\subsection{Coverage Maximization Problem Formulation}
Since the horizontal distance $D^h$ between the RIS and the BS, and the orientation $\psi$ of the RIS will influence the cell coverage, our aim is to maximize the area of the cell coverage $S$ by jointly optimizing $(D^h,\psi)$, as formulated below:
{\setlength\abovedisplayskip{0.1cm}
	\setlength\belowdisplayskip{0cm}
\begin{align}
\label{opt_pro}
\begin{split}
\max_{D^h,\psi}&~S~~~s.t.~\text{(\ref{above_thr})}.
\end{split}
\end{align}

\subsection{Coverage Maximization Algorithm Design}
To solve the formulated problem, we first find the optimal RIS orientation for any horizontal distance $D^h$. The result is given by the following theorem.}
\begin{theorem}
	\label{opt_psi}
	Regardless of the horizontal distance, the optimal RIS orientation is always $\psi=\frac{\pi}{2}$, i.e., the RIS is deployed vertical to the direction from the BS to the RIS.
\end{theorem}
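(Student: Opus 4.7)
The plan is to decompose $S$ from (\ref{area}) into the radial integral $\int_{\phi_l}^{\phi_u}\frac{1}{2}d_{th}^2(\phi)d\phi$ and the triangle term $\frac{1}{2}\sin(\phi_l-\phi_u)l(\phi_l)l(\phi_u)$, and show that each is maximized at $\psi=\pi/2$ for any fixed $D^h$.

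First, I would establish that $d_{th}(\phi)$ is pointwise maximized at $\psi=\pi/2$. The orientation $\psi$ enters the SNR expression in (\ref{max_SNR_cont}) only through the RIS amplitude factor $\cos\theta_i$. Elementary 3D geometry, using that the RIS normal is horizontal and makes angle $\pi/2-\psi$ with the horizontal BS-RIS direction, gives $\cos\theta_i=\sin\psi\cdot\cos\theta_v$, where $\theta_v$ is the fixed vertical angle from the RIS center to the BS. Both $\eta_R$ (containing $\cos^2\theta_i$) and $\eta_X$ (containing $\cos\theta_i$) are therefore monotonically increasing in $\sin\psi$, so the function $g(\phi,d_{BU}^h)$ in (\ref{expression_g}) is pointwise increasing in $\sin\psi$. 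Since $g$ is strictly decreasing in $d_{BU}^h$ on the relevant range, inverting $g(\phi,d_{th}(\phi))=\gamma_{th}$ yields $d_{th}(\phi)$ monotonically increasing in $\sin\psi$, and hence maximized at $\psi=\pi/2$.

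Second, I would exploit the reflection symmetry $S(\psi)=S(\pi-\psi)$, obtained by mirroring the scene about the horizontal BS-RIS axis, to restrict attention to $\psi\in(0,\pi/2]$. On this interval, using the explicit form $l(\phi)=D^h/(\cos\phi-\sin\phi\cot\psi)$ together with the defining equations (\ref{cons_phi_l})--(\ref{cons_phi_u}), I would show that as $\psi$ increases toward $\pi/2$ the admissible angular window $[\phi_l,\phi_u]$ widens and the triangle term grows. Combined with the pointwise growth of $d_{th}^2(\phi)$ from the first step, this would yield $S(\psi)\le S(\pi/2)$.

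The main obstacle is that $\phi_l$ and $\phi_u$ are defined implicitly and depend simultaneously on $\psi$ through both $l(\phi)$ and $d_{th}(\phi)$, so direct differentiation of (\ref{area}) is cumbersome. A cleaner alternative I would try first is a geometric set-inclusion argument: show that the coverage region at any $\psi\in(0,\pi/2)$ is contained, up to a rotation about the RIS center, in the coverage region at $\psi=\pi/2$. Because rotation preserves area, this would immediately give the desired inequality without any explicit manipulation of $\phi_l$ or $\phi_u$.
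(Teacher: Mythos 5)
Your first step matches the paper's key observation: the orientation $\psi$ enters the SNR only through the amplitude factor $\Gamma=\cos\theta_i$, which is maximized at $\psi=\frac{\pi}{2}$, so $d_{th}(\phi)$ is pointwise largest there. The gap is in the second half, where you must handle the rotation of the feasible half-plane (the constraint that the UE lie on the BS side of the RIS). Your primary route --- showing that the radial integral and the triangle term in (\ref{area}) are \emph{each} maximized at $\psi=\frac{\pi}{2}$ --- is both unproven and stronger than what is true in any obvious way: only the sum need be maximized, $\phi_l$ and $\phi_u$ move with $\psi$ through both $l(\phi)$ and $d_{th}(\phi)$, and you yourself flag this as an obstacle without resolving it. Your ``cleaner alternative'' does not rescue the argument: the coverage region is $R_\psi\cap H_\psi$, where $R_\psi=\{d_{BU}^h\le d_{th}(\phi)\}$ is anchored at the \emph{BS} while $H_\psi$ is a half-plane through the \emph{RIS center}. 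A rotation about the RIS center maps $H_\psi$ onto $H_{\pi/2}$ but simultaneously displaces the BS, so the image of $R_\psi$ (which far from the RIS is essentially a disk about the BS determined by $\eta_D d_{BU}^{-2}\ge\gamma_{th}$) is not contained in $R_{\pi/2}$. The claimed set inclusion up to rotation is therefore false, and the area inequality does not follow from it.

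What the paper does instead at this point is a wedge-by-wedge comparison centered at the RIS: rotating the RIS line from $\psi^{(0)}$ to $\frac{\pi}{2}$ gains one wedge of the SNR region and loses the opposite wedge, and along any pair of opposite rays from the RIS center the gained radius exceeds the lost radius. The reason is exactly the monotonicity you already have in hand, used once more: the two boundary points lie on the same SNR contour $\gamma=\gamma_{th}$, the gained-side point is closer to the BS, and since $\gamma$ increases with proximity to both the BS and the RIS, equal SNR forces the gained-side point to be farther from the RIS. Integrating over rays gives $\Delta S^{*}\ge\Delta S^{(0)}$ and hence $S^{*}\ge S^{(0)}$. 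Your proposal is missing this (or an equivalent) mechanism for comparing what is gained and lost when the half-plane rotates; with it, your symmetry reduction to $\psi\in(0,\frac{\pi}{2}]$ becomes unnecessary as well.
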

\begin{proof}
	See Appendix~\ref{app_opt_psi}.
\end{proof}
Then, we optimize the horizontal distance $D^h$ given the optimal RIS orientation $\psi$, i.e.,
\begin{align}
\label{hor_dist}
\max_{D^h}&~S~~~s.t.~\text{(\ref{above_thr})}.
\end{align}
It can be found that $D^h$ has an influence on  $\phi_l$, $\phi_u$, and $d_{th}(\phi)$ in the objective function. However, it is hard to depict the influence in a closed-form expression. Inspired by the technique for solving the max-min problem~\cite{SL_2004}, by regarding $\phi_l$, $\phi_u$, and $d_{th}(\phi)$ as additional optimization variables, their closed-form expressions are no longer needed. However, due to the integral in the objective function, an infinite number of additional variables are needed. To cope with this challenge, we discretize the integral in the objective function as below,
\begin{align}
\int_{\phi_l}^{\phi_u} \frac{1}{2}d_{th}^2(\phi)d\phi\approx \sum_{i=0}^{K-1}\frac{1}{2}d_{th}^2(\phi_l+i\Delta) \Delta,
\end{align}
where the integral interval is divided equally into $K$ parts, each of which has a dimension of $\Delta=\frac{\phi_u-\phi_l}{K}$. For simplicity, we use $y_i$ to replace $d_{th}(\phi_l+i\Delta)$. Therefore, horizontal distance optimization problem (\ref{hor_dist}) can be transformed into its equivalent form,
\begin{subequations}\label{opt_pro_v3}
\begin{align}
\max_{\substack{\phi_l,\phi_u,D^h\\\{y_0,\dots,y_K\}}}& \sum_{i=0}^{N-1}\frac{1}{2}y_i^2 \Delta+\frac{\sin(\phi_l-\phi_u)l(\phi_l)l(\phi_u)}{2}\\
\label{cons_d_th}
s.t.~&g(\phi_l+i\Delta,y_i)=\gamma_{th}, i=0,\dots,K,\\
&\text{(\ref{above_thr}), (\ref{cons_phi_l}), and (\ref{cons_phi_u}),}\notag
\end{align}
\end{subequations}
where constraint (\ref{cons_d_th}) is derived based on (\ref{d_th}). Denote an optimal solution for problem (\ref{opt_pro_v3}) by $(\phi_l^*,\phi_u^*,(D^h)^*,\bm{y}^*)$, where $\bm{y}^*=\{y_0^*,\dots,y_K^*\}$. We have $\big((D^h)^*,\frac{\pi}{2}\big)$ is optimal for RIS deployment problem (\ref{opt_pro}). This is because problem (\ref{opt_pro_v3}) is equivalent to (\ref{hor_dist}). Therefore, $(D^h)^*$ is optimal for horizontal problem (\ref{hor_dist}). Besides, in (\ref{hor_dist}), the horizontal distance is optimized given the RIS orientation as $\frac{\pi}{2}$, and thus, $((D^h)^*,\frac{\pi}{2})$ is an optimal solution for RIS deployment problem (\ref{opt_pro}).

Problem (\ref{opt_pro_v3}) is a constrained optimization problem, which can be solved by the interior point method~\cite{SL_2004}. The basic idea can be summarized as follows: in each iteration, we approximate problem (\ref{opt_pro_v3}) by removing  inequality constraints in (\ref{opt_pro_v3}), and adding an additional logarithmic barrier function consisting of the inequality constraint functions to the objective function of (\ref{opt_pro_v3}). As such, the approximated problem is then solved utilizing the Newton method.

\section{Simulation Results}
\label{sec_simulation}
In this section, we verify our theoretic results on the cell coverage by simulations. The parameters are based on the 3GPP standard~\cite{38901} and existing work~\cite{BHLYZH_2020}, which are summarized in Table~\ref{notation}. To show the effectiveness of the proposed CMA, we compare it with another two algorithms, i.e., random algorithm and BS side scheme~(BSS). In the random algorithm, the horizontal distance and RIS orientation are determined in a random manner. Based on the conclusion that the received SNR at the receiver will be maximized when the RIS is close to the receiver or transmitter~\cite{HBLZ_2020}, the RIS is placed close to the BS at the edge of the far field in scheme BSS.
\begin{table}[!tpb]
	\small
	\centering
	\caption{\normalsize{Simulation Parameters}}
	\label{notation}
	\begin{tabular}{|l|l|}	
		\hline	
		\textbf{Parameters} & \textbf{Values} \\
		\hline\hline
		Height of the BS $H_B$ & $35$~m \\
		\hline
		Height of the UE $H_U$ & $1.5$~m \\
		\hline
		Height of the RIS $H_R$ & $2$~m \\
		\hline
		Size of the RIS element $s_M=s_N$ & $0.04$~m \\
		\hline
		Noise power $\sigma^2$ & $-96$~dBm \\
		\hline
		Wavelength $\lambda$ & $0.1$~m \\
		\hline
		Antenna gain $G$ & $1$ \\
		\hline
		Pathloss exponent $\alpha$ & $2$ \\
		\hline
		UE sensitivity $\gamma_{s}$ & $8$~dB \\
		\hline
		Margin for penetration loss $L_{mar}$ & $28$~dB \\
		\hline
		SNR threshold $\gamma_{th}$ & $36$~dB \\
		\hline
		Discretization level $K$ & $50$ \\
		\hline
	\end{tabular}
\end{table}

\begin{figure*}[!tpb]
	\centering
	\begin{minipage}[b]{0.3\textwidth}
		\centering
		\includegraphics[width=1\textwidth]{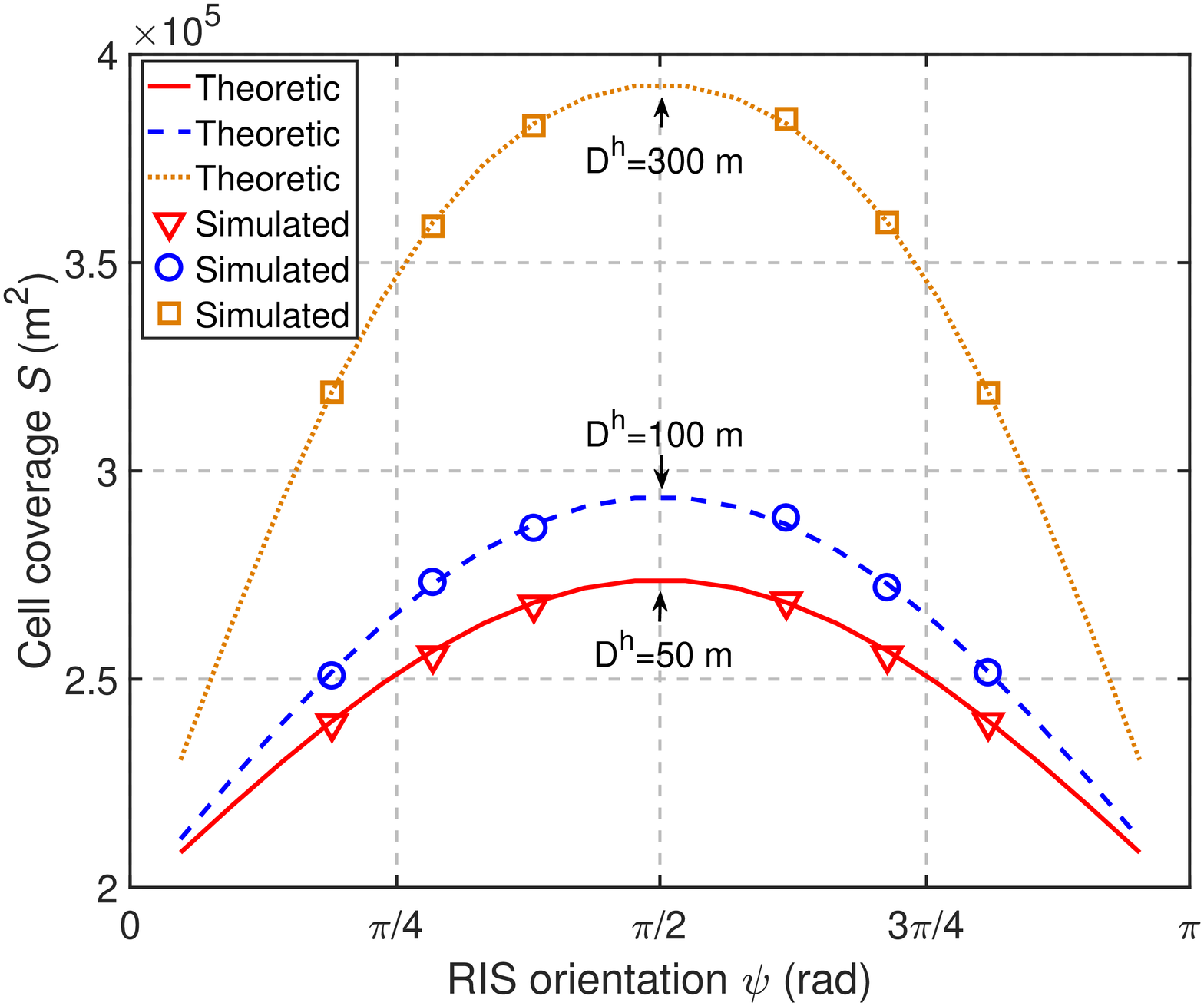}
		\caption{Cell coverage vs. RIS orientation, \newline with $P=2$~W and $M=N=25$.}
		\label{coverage}
	\end{minipage}
	\begin{minipage}[b]{0.3\textwidth}
		\centering
		\includegraphics[width=1\textwidth]{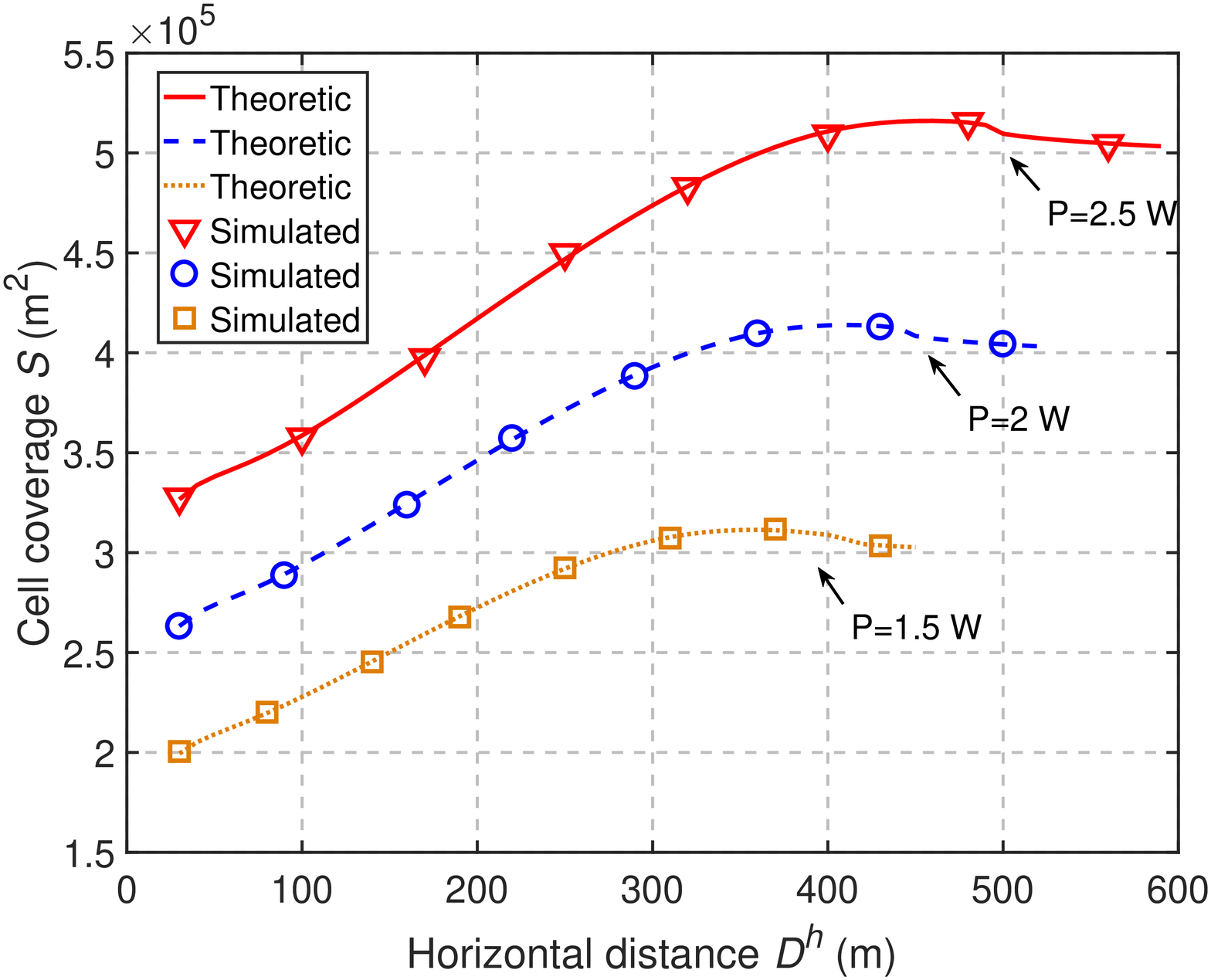}
		\caption{Cell coverage vs. horizontal distance, \newline with $\psi=\frac{\pi}{2}$ and $M=N=25$.}
		\label{cvh}
	\end{minipage}
	\begin{minipage}[b]{0.3\textwidth}
		\centering
		\includegraphics[width=1\textwidth]{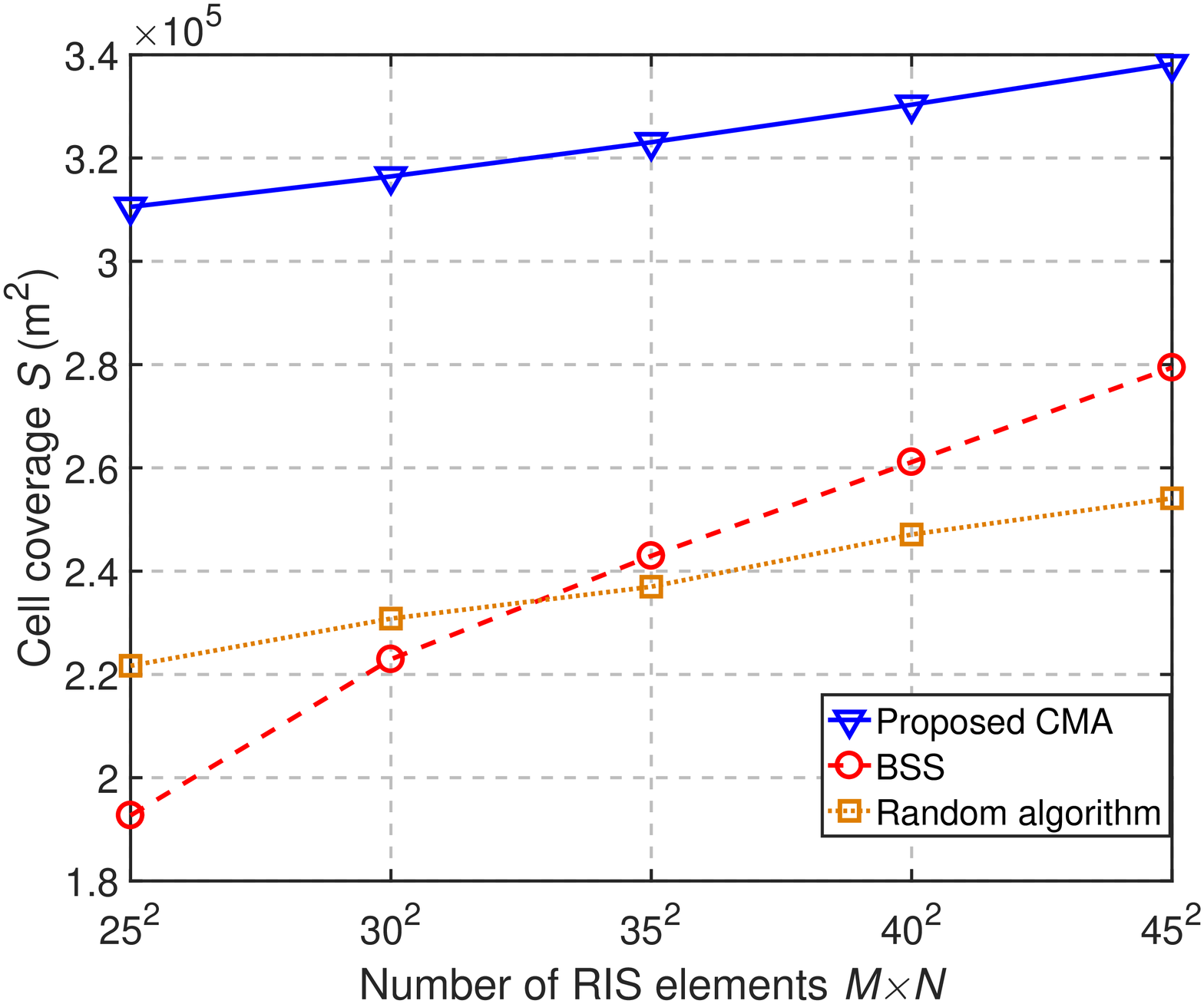}
		\caption{Cell coverage vs. number of RIS \newline elements, with $P=1.5$~W}
		\label{cvr}
	\end{minipage}
\end{figure*}
Fig.~\ref{coverage} shows the cell coverage $S$ versus RIS orientation $\psi$. The theoretic result is based on (\ref{area}), and the simulated result is obtained by $10^5$ Monte Carlo simulations. We can observe that the theoretic and simulated results match well, which verifies our derivation. From Fig.~\ref{coverage}, we can also find that the optimal RIS orientation is $\psi=\frac{\pi}{2}$, which is consistent with Theorem~\ref{opt_psi}.

Fig.~\ref{cvh} depicts the cell coverage $S$ versus the horizontal distance $D^h$ under the optimal RIS orientation. We can observe that the theoretic and simulated results are consistent, which further verifies our derivation. From Fig.~\ref{cvh}, we can also find that when the RIS moves away from the BS, the coverage will become larger first since the reflection coefficient increases. However, when $D^h$ further increases, the cell coverage degrades since the received SNR is negatively correlated with the distance between the BS and the RIS. Therefore, the RIS should be placed at a moderate distance from the BS. Besides, it can be found that the optimal RIS deployment is close to the cell edge, and thus, the RIS can significantly benefit UEs at the cell edge. According to Fig.~\ref{cvh}, it can also be observed that the cell coverage increases with the transmit power, due to improvement of the received SNR.


Fig.~\ref{cvr} shows the cell coverage $S$ versus the number of RIS elements $MN$. From Fig.~\ref{cvr}, we can find that the proposed CMA achieves larger cell coverage than the random algorithm. Besides, it can be observed that the CMA outperforms the BSS, since the influence of incidence angle on the reflection coefficient is not considered in~\cite{HBLZ_2020}. According to Fig.~\ref{cvr}, when the number of RIS elements increases, the cell coverage becomes larger, since the received SNR is positively correlated with the number of elements.
\section{Conclusion}
\label{sec_conclusion}
In this letter, we have considered a downlink RIS-assisted network with one BS and one UE. The coverage of this network has been analyzed, and a problem has been formulated to maximize the cell coverage by optimizing the RIS placement. To solve the formulated problem, we have proposed the algorithm CMA. From the analysis and simulation, we can conclude that the RIS should be deployed vertical to the direction from the BS to the RIS. Besides, we should place the RIS at a moderate distance from the BS.
\begin{appendices}
 	\vspace{-0.2cm}
%
%
%
 	\section{Proof of Remark \ref{profile}}
 	\label{app_profile}
 	\vspace{-0.1cm}
 	First, we show that as $\phi$ approaches to $\pi$, $d_{th}(\phi)$ will become smaller while $l(\phi)$ will be larger. Based on cosine theorem, the horizontal distance between the UE and the RIS can be expressed as $d^h=\sqrt{(D^h)^2+(d_{BU}^h)^2-2D^hd_{BU}^h\cos(\phi)}$.
 	Therefore, when $\phi$ approaches to $\pi$, $d^h$ will become larger, which leads to a larger distance between the UE and the RIS, and thus, a lower SNR $\gamma$. Therefore, $d_{th}(\phi)$ becomes smaller. 
 	
 	When $\phi=\psi$ or $\phi=\psi+\pi$, $l(\phi)\rightarrow\infty$ while $d_{th}(\phi)$ is limited. On the contrary, when $\phi=0$, we have $l(\phi)=D^h<d_{th}(\phi)$. Therefore, there are two angles satisfying $d_{th}(\phi)=l(\phi)$, and they are located within $[0,\psi)$ and $(\psi+\pi,2\pi)$, respectively. Besides, $d_{th}(\phi)$ is larger than $l(\phi)$ only when $\phi \in [0,\phi_l)\cup(\phi_u,2\pi)$.
 
\vspace{-0.3cm} 

\section{Proof of Theorem~\ref{opt_psi}}
\label{app_opt_psi}
\vspace{-0.1cm}
\begin{figure}[!tpb]
	\centering 
	\subfigure[]{
		\label{fig_ilu_1}
		\includegraphics[width=2.3in]{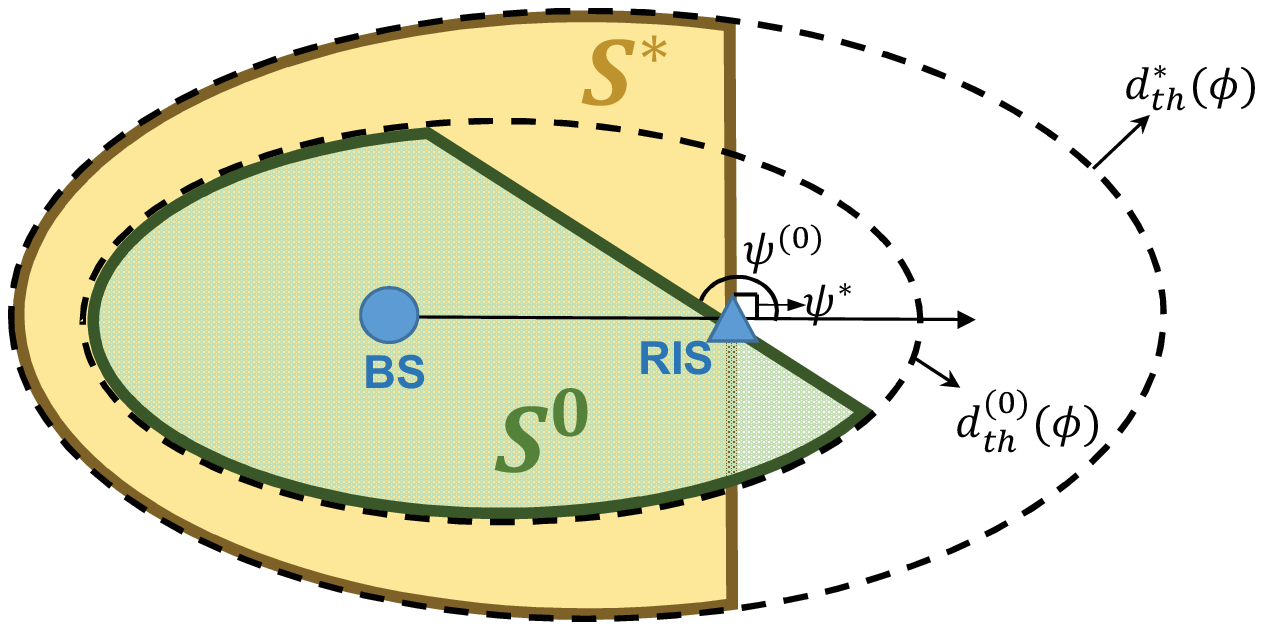}}
	\subfigure[]{
		\label{fig_ilu_2}
		\includegraphics[width=2.2in]{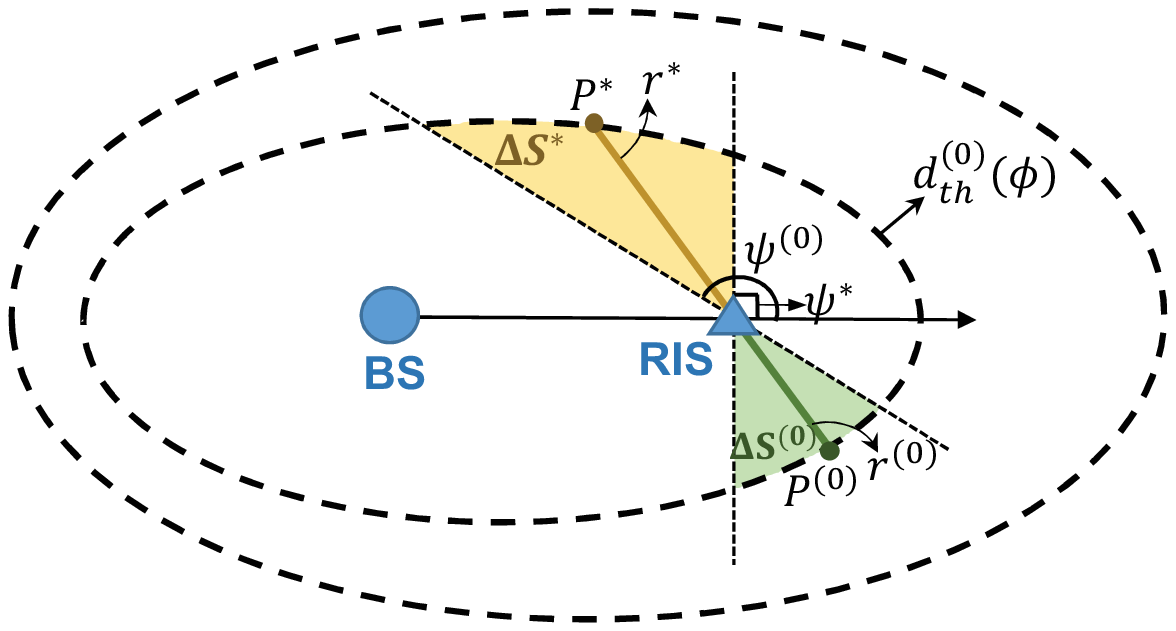}}
	\caption{Graphical interpretation for proof of Theorem~\ref{opt_psi}.}
	\label{fig_ilu}
\end{figure}
To show that $\psi^*=\frac{\pi}{2}$ is optimal, we compare $\psi^*$ with another RIS orientation $\psi^{(0)}$. As shown in Fig.~\ref{fig_ilu_1}, the cell coverage under $\psi^*$ and that under $\psi^{(0)}$ correspond to the yellow region and green region, respectively, where the area of the two regions are denoted by $S^*$ and $S^{(0)}$, respectively. It is worthwhile noting that $d_{th}^*$ is larger than $d_{th}^{(0)}$ since as $\psi$ deviates from $\frac{\pi}{2}$, reflection amplitude $\Gamma$ decreases, and thus, the SNR at the UE degrades. In the following, we show that $S^{(0)}\le S^* $, from which we can obtain that $\frac{\pi}{2}$ is the optimal RIS orientation.

Note that $S^*-S^{(0)}>\Delta S^*-\Delta S^{(0)}$, where $\Delta S^*$ and $\Delta S^{(0)}$ are the areas of the two sectors shown in Fig.~\ref{fig_ilu_2}. Therefore, we will focus on comparing $\Delta S^*$ with $\Delta S^{(0)}$ in the following. To compare $\Delta S^*$ with $\Delta S^{(0)}$, we arbitrarily select two opposite radius of the sectors, denoted by $r^*$ and $r^{(0)}$, respectively, and we use $P^*$ and $P^{(0)}$ to represent the intersection between $r^*$ and $d_{th}^{(0)}$, and that between $r^{(0)}$ and $d_{th}^{(0)}$, respectively. Compared with the case where the UE is in $P^{(0)}$, the UE is closer to the BS when it is in $P^*$. However, SNR $\gamma$ is equal to $\gamma_{th}$ when the UE is in $P^{(0)}$ or $P^*$. Therefore, when the UE is in $P^{(0)}$, it is closer to the RIS compared with the case where the UE is in $P^*$, i.e., $r^{(0)} <r^*$. As a result, we have $\Delta S^{(0)}<\Delta S^*$, which indicates that $S^*\ge S^{(0)}$.
\end{appendices}

\end{document}